\newenvironment{Shaded}{}{}
\newcommand{\CommentTok}[1]{\textcolor[rgb]{0.38,0.63,0.69}{\textit{#1}}}
\newcommand{\DataTypeTok}[1]{\textcolor[rgb]{0.56,0.13,0.00}{#1}}
\newcommand{\DecValTok}[1]{\textcolor[rgb]{0.25,0.63,0.44}{#1}}
\newcommand{\KeywordTok}[1]{\textcolor[rgb]{0.00,0.44,0.13}{\textbf{#1}}}
\newcommand{\NormalTok}[1]{#1}
\newcommand{\OperatorTok}[1]{\textcolor[rgb]{0.40,0.40,0.40}{#1}}
\newcommand{\PreprocessorTok}[1]{\textcolor[rgb]{0.74,0.48,0.00}{#1}}
\providecommand{\tightlist}{}
\DeclareRobustCommand{\rvdots}{%
  \vbox{
    \baselineskip4\p@\lineskiplimit\z@
    \kern-\p@
    \hbox{.}\hbox{.}\hbox{.}
  }}
\tikzset{filter/.style={rounded corners,draw,inner xsep=8}}
\tikzset{clone/.style={double}}
\pgfplotsset{compat=1.9}
\title{Denotational Semantics and a Fast Interpreter for jq}
\author{Michael Färber}
\affiliation{
  \institution{University of Innsbruck}
  \city{Innsbruck}
  \country{Austria}
}
\email{michael.faerber@uibk.ac.at}
\date{}
\begin{document}
\begin{abstract}
jq is a widely used tool that provides a programming language to
manipulate JSON data. However, its semantics are currently only
specified by its implementation, making it difficult to reason about its
behaviour. To this end, I provide a syntax and denotational semantics
for a subset of the jq language. In particular, the semantics provide a
new way to interpret updates. I implement an extended version of the
semantics in a novel interpreter for the jq language called jaq.
Although jaq uses a significantly simpler approach to execute jq
programs than jq, jaq is faster than jq on ten out of thirteen
benchmarks.
\end{abstract}
\maketitle

\hypertarget{introduction}{%
\section{Introduction}\label{introduction}}

UNIX has popularised the concept of \emph{filters} and \emph{pipes}
\citep{DBLP:journals/bstj/Ritchie84}: A filter is a program that reads
from an input stream and writes to an output stream. Pipes are used to
compose filters.

JSON (JavaScript Object Notation) is a widely used data serialisation
format \citep{rfc8259}. A JSON value is either null, a boolean, a
number, a string, an array of values, or an associative map from strings
to values.

jq is a tool that provides a language to define filters and an
interpreter to execute them. Where UNIX filters operate on streams of
characters, jq filters operate on streams of JSON values. This allows to
manipulate JSON data with relatively compact filters. For example, given
as input the public JSON dataset of streets in Paris
\citep{paris-voies}, jq retrieves the number of streets (6508) with the
filter ``\texttt{length}'', the names of the streets with the filter
``\texttt{.{[}{]}.fields.nomvoie}'', and the total length of all streets
(1576813 m) with the filter
``\texttt{{[}.{[}{]}.fields.longueur{]}\ \textbar{}\ add}''. jq provides
a Turing-complete language that is interesting on its own; for example,
``\texttt{{[}0,\ 1{]}\ \textbar{}\ recurse({[}.{[}1{]},\ add{]}){[}0{]}"}
generates the stream of Fibonacci numbers. This makes jq a widely used
tool. I refer to the program jq as''jq" and to its language as ``the jq
language''.

The semantics of the jq language are only informally specified in the jq
manual \citep{jq-manual}. However, the documentation frequently does not
cover certain cases, or the implementation downright contradicts the
documentation. For example, the documentation states that the filter
\texttt{limit(n;\ f)} ``extracts up to \texttt{n} outputs from
\texttt{f}''. However, \texttt{limit(0;\ f)} extracts up to 1 outputs
from \texttt{f}, and for negative values of \texttt{n},
\texttt{limit(n;\ f)} extracts all outputs of \texttt{f}. While this
particular example could easily be corrected, the underlying issue of
having no formally specified semantics to rely on remains. Having such
semantics also allows to determine whether certain behaviour of the
implementation is accidental or intended.

However, a formal specification of the behaviour of jq would be very
verbose, because jq has many special cases whose merit is not apparent.
Therefore, I have striven to create denotational semantics
(\autoref{specification}) that closely resemble those of jq such that in
most cases, their behaviour coincides, whereas they may differ in more
exotic cases. One particular improvement over jq are the new update
semantics (\autoref{updates}), which are simpler to describe and
implement, eliminate a range a potential errors, and allow for more
performant execution.

Before executing a filter, jq compiles it to a relatively low-level
representation. Based on the formal semantics, I wanted to find out how
fast a simple interpreter could be that operates on an only slightly
processed abstract syntax tree of the filter. To this end, I implemented
an interpreter called \emph{jaq} (\autoref{implementation}). One
particular challenge for performant execution turned out to be how to
avoid clones of data (\autoref{cloning}) and how to restrict the scope
of clones (\autoref{strictness}) in order to allow for mutation.

To evaluate jaq and jq, I wrote a number of benchmarks
(\autoref{evaluation}). Furthermore, I evaluated jaq and jq using an
existing interpreter written in the jq language for the Turing-complete
language Brainfuck. The surprising result is that even with its naive
execution approach, jaq is faster than jq on ten out of thirteen
benchmarks.

\hypertarget{preliminaries}{%
\section{Preliminaries}\label{preliminaries}}

This goal of this section is to convey an intuition about how jq
functions. In \autoref{specification}, I will formally specify syntax
and semantics that closely approximate those of jq. The official
documentation of jq is \citep{jq-manual}.

jq programs are called \emph{filters}. For now, let us consider a filter
to be a function from a value to a (lazy) stream of values. Furthermore,
let us assume a value to be either a boolean, an integer, or an array of
values.

The identity filter ``\texttt{.}'' returns a stream containing the
input.

Arithmetic operations, such as addition, subtraction, multiplication,
division, and remainder, are available in jq. For example,
``\texttt{.\ +\ 1}'' returns a stream containing the successor of the
input. Here, ``\texttt{1}'' is a filter that returns the value
\texttt{1} for any input.

Concatenation is an important operator in jq: The filter
``\texttt{f,\ g}'' concatenates the outputs of the filters \texttt{f}
and \texttt{g}. For example, the filter ``\texttt{.,\ .}'' returns a
stream containing the input value twice.

Composition is one of the most important operators in jq: The filter
``\texttt{f\ \textbar{}\ g}'' maps the filter \texttt{g} over all
outputs of the filter \texttt{f}. For example,
``\texttt{(1,\ 2,\ 3)\ \textbar{}\ (.\ +\ 1)}'' returns
\texttt{2,\ 3,\ 4}.

Arrays are created from a stream produced by \texttt{f} using the filter
``\texttt{{[}f{]}}''. For example, the filter
``\texttt{{[}1,\ 2,\ 3{]}}'' concatenates the output of the filters
``\texttt{1}'', ``\texttt{2}'', and ``\texttt{3}'' and puts it into an
array, yielding the value \texttt{{[}1,\ 2,\ 3{]}}. The inverse filter
``\texttt{.{[}{]}}'' returns a stream containing the values of an array
if the input is an array. For example, running ``\texttt{.{[}{]}}'' on
the array \texttt{{[}1,\ 2,\ 3{]}} yields the stream \texttt{1,\ 2,\ 3}
consisting of three values. We can combine the two shown filters to map
over arrays; for example, when given the input \texttt{{[}1,\ 2,\ 3{]}},
the filter ``\texttt{{[}.{[}{]}\ \textbar{}\ (.\ +\ 1){]}}'' returns a
single value \texttt{{[}2,\ 3,\ 4{]}}. The values of an array at indices
produced by \texttt{f} are returned by ``\texttt{.{[}f{]}}''. For
example, given the input \texttt{{[}1,\ 2,\ 3{]}}, the filter
``\texttt{.{[}0,\ 2,\ 0{]}}'' returns the stream \texttt{1,\ 3,\ 1}.

Case distinctions can be performed with the filter
``\texttt{if\ f\ then\ g\ else\ h\ end}''. For every value \texttt{v}
produced by \texttt{f}, this filter returns the output of \texttt{g} if
\texttt{v} is true and the output of \texttt{h} otherwise. For example,
given the input \texttt{1}, the filter
``\texttt{if\ (.\ \textless{}\ 1,\ .\ ==\ 1,\ .\ \textgreater{}=\ 1)\ then\ .\ else\ {[}{]}\ end}''
returns \texttt{{[}{]},\ 1,\ 1}.

Fix points are calculated as follows: Given a filter \texttt{f},
``\texttt{recurse(f)}'' returns the output of
``\texttt{.,\ (f\ \textbar{}\ recurse(f))}''. This way, we can define a
filter to calculate the factorial function, for example.

\begin{example}[Factorial]\label{ex:fac}Let us define a filter
\texttt{fac} that should return \(n!\) for any input number \(n\). We
will define \texttt{fac} using the fix point of a filter
\texttt{update}. The input and output of \texttt{update} shall be an
array \texttt{{[}n,\ acc{]}}, satisfying the invariant that the final
output is \texttt{acc} times the factorial of \texttt{n}. The initial
value passed to \texttt{update} is the array ``\texttt{{[}.,\ 1{]}}''.
We can retrieve \texttt{n} from the array with ``\texttt{.{[}0{]}}'' and
\texttt{acc} with ``\texttt{.{[}1{]}}''. We can now define
\texttt{update} as
``\texttt{if\ .{[}0{]}\ \textgreater{}\ 1\ then\ {[}.{[}0{]}\ -\ 1,\ .{[}0{]}\ *\ .{[}1{]}{]}\ else\ empty\ end}'',
where ``\texttt{empty}'' is a filter that returns an empty stream. Given
the input value \texttt{4}, the filter
``\texttt{{[}.,\ 1{]}\ \textbar{}\ recurse(update)}'' returns
\texttt{{[}4,\ 1{]},\ {[}3,\ 4{]},\ {[}2,\ 12{]},\ {[}1,\ 24{]}}. We
are, however, only interested in the accumulator contained in the last
value. So we can write
``\texttt{{[}.,\ 1{]}\ \textbar{}\ last(recurse(update))\ \textbar{}\ .{[}1{]}}'',
where ``\texttt{last(f)}'' is a filter that outputs the last output of
\texttt{f}. This then yields a single value \texttt{24} as
result.\end{example}

Composition can also be used to bind values to \emph{variables}. The
filter ``\texttt{f\ as\ \$x\ \textbar{}\ g}'' performs the following:
Given an input value \texttt{i}, for every output \texttt{o} of the
filter \texttt{f} applied to \texttt{i}, the filter binds the variable
\texttt{\$x} to the value \texttt{o}, making it accessible to
\texttt{g}, and yields the output of \texttt{g} applied to the original
input value \texttt{i}. For example, the filter
``\texttt{(0,\ 2)\ as\ \$x\ \textbar{}\ ((1,\ 2)\ as\ \$y\ \textbar{}\ (\$x\ +\ \$y))}''
yields the stream \texttt{1,\ 2,\ 3,\ 4}. Note that in this particular
case, we could also write this as ``\texttt{(0,\ 2)\ +\ (1,\ 2)}'',
because arithmetic operators such as ``\texttt{f\ +\ g}'' take as inputs
the Cartesian product of the output of \texttt{f} and
\texttt{g}.\footnote{Haskell users might appreciate the similarity of
  the two filters to their Haskell analoga
  ``\texttt{{[}0,\ 2{]}\ \textgreater{}\textgreater{}=\ (\textbackslash{}x\ -\textgreater{}\ {[}1,\ 2{]}\ \textgreater{}\textgreater{}=\ (\textbackslash{}y\ -\textgreater{}\ return\ (x+y)))}''
  and
  ``\texttt{(+)\ \textless{}\$\textgreater{}\ {[}0,\ 2{]}\ \textless{}*\textgreater{}\ {[}1,\ 2{]}}'',
  which both return \texttt{{[}1,\ 2,\ 3,\ 4{]}}.} However, there are
cases where variables are indispensable.

\begin{example}[Variables Are Necessary]jq defines a filter
``\texttt{inside(xs)}'' that expands to
``\texttt{.\ as\ \$x\ \textbar{}\ xs\ \textbar{}\ contains(\$x)}''.
Here, we wish to pass \texttt{xs} as input to \texttt{contains}, but at
the same point, we also want to pass the input given to \texttt{inside}
as an argument to \texttt{contains}. Without variables, we could not do
both.\end{example}

Folding over streams can be done using \texttt{reduce} and
\texttt{foreach}: The filter
``\texttt{reduce\ xs\ as\ \$x\ (init;\ f)}'' keeps a state that is
initialised with the output of \texttt{init}. For every element
\texttt{\$x} yielded by the filter \texttt{xs}, \texttt{reduce} feeds
the current state to the filter \texttt{f}, which may reference
\texttt{\$x}, then sets the state to the output of \texttt{f}. When all
elements of \texttt{xs} have been yielded, \texttt{reduce} returns the
current state. For example, the filter
``\texttt{reduce\ .{[}{]}\ as\ \$x\ (0;\ .\ +\ \$x)}'' calculates the
sum over all elements of an array. Similarly,
``\texttt{reduce\ .{[}{]}\ as\ \$x\ (0;\ .\ +\ 1)}'' calculates the
length of an array. These two filters are called ``\texttt{add}'' and
``\texttt{length}'' in jq, and they allow to calculate the average of an
array by ``\texttt{add\ /\ length}''. The filter
``\texttt{foreach\ xs\ as\ \$x\ (init;\ f)}'' is similar to
\texttt{reduce}, but also yields all intermediate states, not only the
last state. For example,
``\texttt{foreach\ .{[}{]}\ as\ \$x\ (0;\ .\ +\ \$x)}'' yields the
cumulative sum over all array elements.

Updating values can be done with the operator ``\texttt{\textbar{}=}'',
which has a similar function as lens setters in languages such as
Haskell
\citep{DBLP:conf/popl/FosterGMPS05, DBLP:journals/programming/PickeringGW17}:
Intuitively, the filter ``\texttt{p\ \textbar{}=\ f}'' considers any
value \texttt{v} returned by \texttt{p} and replaces it by the output of
\texttt{f} applied to \texttt{v}.{[}\^{}haskell-lenses{]} We call a
filter on the left-hand side of ``\texttt{\textbar{}=}'' a \emph{path
expression}. For example, when given the input \texttt{{[}1,\ 2,\ 3{]}},
the filter ``\texttt{.{[}{]}\ \textbar{}=\ (.\ +\ 1)}'' yields
\texttt{{[}2,\ 3,\ 4{]}}, and the filter
``\texttt{.{[}1{]}\ \textbar{}=\ (.\ +\ 1)}'' yields
\texttt{{[}1,\ 3,\ 3{]}}. We can also nest these filters; for example,
when given the input \texttt{{[}{[}1,\ 2{]},\ {[}3,\ 4{]}{]}}, the
filter
``\texttt{(.{[}{]}\ \textbar{}\ .{[}{]})\ \textbar{}=\ (.\ +\ 1)}''
yields \texttt{{[}{[}2,\ 3{]},\ {[}4,\ 5{]}{]}}. However, not every
filter is a path expression; for example, the filter ``\texttt{1}'' is
not a path expression because ``\texttt{1}'' does not point to any part
of the input value but creates a new value.

Identities such as ``\texttt{.{[}{]}\ \textbar{}=\ f}'' being equivalent
to ``\texttt{{[}.{[}{]}\ \textbar{}\ f{]}}'', or
``\texttt{.\ \textbar{}=\ f}'' being equivalent to \texttt{f}, would
allow defining the behaviour of updates. However, these identities do
not hold in jq due the way it handles filters \texttt{f} that return
multiple values. In particular, when we pass \texttt{0} to the filter
``\texttt{.\ \textbar{}=\ (1,\ 2)}'', the output is \texttt{1}, not
\texttt{(1,\ 2)} as we might have expected. Similarly, when we pass
\texttt{{[}1,\ 2{]}} to the filter
``\texttt{.{[}{]}\ \textbar{}=\ (.,\ .)}'', the output is
\texttt{{[}1,\ 2{]}}, not \texttt{{[}1,\ 1,\ 2,\ 2{]}} as expected. This
behaviour of jq is cumbersome to define and to reason about. This
motivates in part the definition of more simple and elegant semantics
that behave like jq in most typical use cases but eliminate corner cases
like the ones shown.

\hypertarget{specification}{%
\section{Specification}\label{specification}}

This section describes syntax and semantics for a subset of the jq
language. To set the formal syntax and semantics apart from the concrete
syntax introduced in \autoref{preliminaries}, we use cursive font (as in
``\(f\)'', ``\(v\)'') for the specification instead of the previously
used typewriter font (as in ``\texttt{f}'', ``\texttt{v}'').

\hypertarget{syntax}{%
\subsection{Syntax}\label{syntax}}

A \emph{filter} \(f\) is defined by
\[f \coloneqq n \mid \$x \mid . \mid .[] \mid .[f] \mid [f] \mid (f) \mid f? \mid f \star f \mid f \circ f \mid \ifj f \thenj f \elsej f \mid x \mid x(f; \dots; f)\]
where \(n\) is an integer and \(x\) is an identifier (such as
``empty''). We call \(\$x\) a variable. By convention, we write \(\$x'\)
to denote a fresh variable. The potential instances of \(\star\) and
\(\circ\) are given in \autoref{tab:binops}. Furthermore, \(f\) can be a
variable binding of the shape ``\(f \as \$x \mid f\)'' or a fold of the
shape ``\(\phi\; f \as \$x (f; f)\)'', where \(\phi\) is either
``reduce'' or ``foreach''.

\begin{table}

\caption{Binary operators, given in order of increasing precedence.
Operators surrounded by parentheses have equal
precedence.\label{tab:binops}}

\begin{tabular}{lll}

\toprule

Name & Symbol & Operators\tabularnewline

\midrule

Complex & \(\star\) & ``\(\mid\)'', ``,'', ``\(\models\)'', ``or'',
``and''\tabularnewline
Cartesian & \(\circ\) & (\(=\), \(\neq\)), (\(<\), \(\leq\), \(>\),
\(\geq\)), (\(+\), \(-\)), (\(*\), \(/\)), \(\%\)\tabularnewline

\bottomrule

\end{tabular}

\end{table}

A \emph{filter definition} has the shape
``\(f(x_1; \dots; x_n) \coloneqq g\)''. Here, \(f\) is an \(n\)-ary
filter where \(g\) may refer to \(x_i\). For example, this allows us to
define filters that produce the booleans, by defining
\(\true \coloneqq (0 = 0)\) and \(\false \coloneqq (0 \neq 0)\).

A value \(v\) is defined by
\[v \coloneqq \true \mid \false \mid n \mid [v, \dots, v]\] where \(n\)
is an integer. While this captures only a subset of JSON values, it
provides a solid base to specify semantics such that they are relatively
straightforward to extend to the full set of JSON values.

\hypertarget{semantics}{%
\subsection{Semantics}\label{semantics}}

The goals for creating these semantics were, in descending order of
importance:

\begin{itemize}
\tightlist
\item
  Simplicity: The semantics should be easy to describe and implement.
\item
  Performance: The semantics should allow for performant execution.
\item
  Compatibility: The semantics should be consistent with jq.
\end{itemize}

Let us start with a few definitions. A context is a mapping from
variables to values. A value result is either a value or an error
\(\bot\). A stream of value results is written as
\(\langle v_0, \dots, v_n \rangle\). The concatenation of two streams
\(s_1\), \(s_2\) is written as \(s_1 + s_2\).

The \emph{evaluation} of a filter \(f\) with a context \(c\) and a value
result \(v\) is denoted as \(f|^c_v\) and returns a stream of value
results. We impose for any filter \(f\),
\(f|^c_\bot = \langle \bot \rangle\), and define its evaluation
semantics only on values. We say that two filters \(f, g\) are
equivalent iff \(f|^c_v = g|^c_v\) for all \(c\) and \(v\).

We are now going to introduce a few helper functions. The first function
transform a stream into an array if all stream elements are values, or
into the leftmost error\footnote{In these simplified semantics, we have
  only a single kind of error, \(\bot\), so it might seem pointless to
  specify which error we return. However, in an implementation, we may
  have different kinds of errors.} in the stream otherwise:
\[[\langle v_0, \dots, v_n\rangle] = \begin{cases}
\left[v_0, \dots, v_n\right] & \text{if for all  } i, v_i \neq \bot \\
v_{\min\{i \mid v_i = \bot\}} & \text{otherwise} \\
\end{cases}\] The next function helps define filters such as
if-then-else, conjunction, and disjunction:
\[\ite(v, i, t, e) = \begin{cases}
\langle \bot \rangle & \text{if } v = \bot \\
t & \text{if } v \neq \bot \text{ and } v = i \\
e & \text{otherwise}
\end{cases}\] The last function serves to retrieve the \(i\)-th element
from a list, if it exists: \[v[i] = \begin{cases}
v_i & \text{if } v = [v_0, \dots, v_n] \text{ and } 0 \leq i < n \\
\bot & \text{otherwise}
\end{cases}\]

To evaluate calls to filters that have been introduced by definition, we
define the substitution \(\varphi[f_1 / x_1, \dots, f_n / x_n]\) to be
\(\sigma \varphi\), where
\(\sigma = \left\{x_1 \mapsto f_1, \dots, x_n \mapsto f_n\right\}\). The
substitution \(\sigma \varphi\) is defined in
\autoref{tab:substitution}: It both applies the substitution \(\sigma\)
and replaces all variables bound in \(\varphi\) by fresh ones. This
prevents variable bindings in \(\varphi\) from shadowing variables that
occur in the co-domain of \(\sigma\).

\begin{example}Consider the filter ``\(0 \as \$x \mid f(\$x)\)'', where
``\(f(g) \coloneqq 1 \as \$x \mid g\)''. Here, ``\(f(\$x)\)'' expands to
``\(1 \as \$x' \mid \$x\)'', where ``\(\$x'\)'' is a fresh variable. The
whole filter expands to ``\(0 \as \$x \mid 1 \as \$x' \mid \$x\)'',
which evaluates to 0. If we would (erroneously) fail to replace \(\$x\)
in \(f(g)\) by a fresh variable, then the whole filter would expand to
``\(0 \as \$x \mid 1 \as \$x \mid \$x\)'', which evaluates to
1.\end{example}

\begin{table}

\caption{Substitution. Here, \(\$x'\) is a fresh variable and
\(\sigma' = \sigma\left\{\$x \mapsto \$x'\right\}\).
\label{tab:substitution}}

\begin{tabular}{ll}

\toprule

\(\varphi\) & \(\sigma\varphi\)\tabularnewline

\midrule

\(.\), \(n\) (where \(n \in \mathbb{Z}\)), or
\(.[]\) & \(\varphi\)\tabularnewline
\(\$x\) or \(x\) & \(\sigma(\varphi)\)\tabularnewline
\(.[f]\) & \(.[\sigma f]\)\tabularnewline
\(f?\) & \((\sigma f)?\)\tabularnewline
\(f \star g\) & \(\sigma f \star \sigma g\)\tabularnewline
\(f \circ g\) & \(\sigma f \circ \sigma g\)\tabularnewline
\(\ifj f \thenj g \elsej h\) & \(\ifj \sigma f \thenj \sigma g \elsej \sigma h\)\tabularnewline
\(x(f_1; \dots; f_n)\) & \(x(\sigma f_1; \dots; \sigma f_n)\)\tabularnewline
\(f \as \$x \mid g\) & \(\sigma f \as \$x' \mid \sigma' g\)\tabularnewline
\(\phi\; xs \as \$x (init; f)\) & \(\phi\; \sigma xs \as \$x'(\sigma init; \sigma' f)\)\tabularnewline

\bottomrule

\end{tabular}

\end{table}

\begin{table}

\caption{Evaluation semantics. \label{tab:eval-semantics}}

\begin{tabular}{ll}

\toprule

\(\varphi\) & \(\varphi|^c_v\)\tabularnewline

\midrule

\(\emptys\) & \(\langle \rangle\)\tabularnewline
\(.\) & \(\langle v \rangle\)\tabularnewline
\(n\) (where
\(n \in \mathbb{Z}\)) & \(\langle n \rangle\)\tabularnewline
\(\$x\) & \(\langle c(\$x) \rangle\)\tabularnewline
\([f]\) & \(\langle \left[f|^c_v\right] \rangle\)\tabularnewline
\(f, g\) & \(f|^c_v + g|^c_v\)\tabularnewline
\(f \mid g\) & \(\sum_{x \in f|^c_v} g|^c_x\)\tabularnewline
\(f \as \$x \mid g\) & \(\sum_{x \in f|^c_v} g|^{c\{\$x \mapsto x\}}_v\)\tabularnewline
\(f \circ g\) & \(\sum_{x \in f|^c_v} \sum_{y \in g|^c_v} \langle x \circ y \rangle\)\tabularnewline
\(f?\) & \(\sum_{x \in f|^c_v} \begin{cases}\langle \rangle & \text{if } x = \bot \\ \langle x \rangle & \text{otherwise}\end{cases}\)\tabularnewline
\(f \andj g\) & \(\sum_{x \in f|^c_v} \ite(x, \false, \langle \false \rangle, g|^c_v)\)\tabularnewline
\(f \orj g\) & \(\sum_{x \in f|^c_v} \ite(x, \true, \langle \true \rangle, g|^c_v)\)\tabularnewline
\(\ifj f \thenj g \elsej h\) & \(\sum_{x \in f|^c_v} \ite(x, \true, g|^c_v, h|^c_v)\)\tabularnewline
\(.[]\) & \(\begin{cases}\langle v_0, \dots, v_n\rangle & \text{if } v = [v_0, \dots, v_n] \\ \langle \bot \rangle & \text{otherwise}\end{cases}\)\tabularnewline
\(.[f]\) & \(\sum_{i \in f|^c_v} \langle v[i] \rangle\)\tabularnewline
\(\phi\; xs \as \$x (init; f)\) & \(\sum_{i \in init|^c_v} \phi^c_i(xs|^c_v, f)\)\tabularnewline
\(x(f_1; \dots; f_n)\) & \(g[f_1 / x_1, \dots, f_n / x_n]|^c_v\) if
\(x(x_1; \dots; x_n) \coloneqq g\)\tabularnewline
\(f \models g\) & see \autoref{tab:update-semantics}\tabularnewline

\bottomrule

\end{tabular}

\end{table}

The evaluation semantics are given in \autoref{tab:eval-semantics}. We
suppose that the Cartesian operator \(\circ\) is defined on pairs of
values, yielding a value result. We have seen examples of the shown
filters in \autoref{preliminaries}. The semantics diverge relatively
little from the implementation in jq. One notable exception is
\(f \circ g\), which jq evaluates differently as
\(\sum_{y \in g|^c_v} \sum_{x \in f|^c_v} \langle x \circ y \rangle\).
The reason will be given in \autoref{cloning}. Note that the difference
only shows when both \(f\) and \(g\) return multiple values.

\[\phi^c_v(xs, f) \coloneqq \begin{cases}
\langle \phantom{v} \rangle + \sum_{x \in f|^{c\{\$x \mapsto x\}}_v}\phi^c_x(xt, f) & \text{if } xs = \langle x \rangle + xt \text{ and } \phi = \reduce \\
\langle v \rangle + \sum_{x \in f|^{c\{\$x \mapsto x\}}_v}\phi^c_x(xt, f) & \text{if } xs = \langle x \rangle + xt \text{ and } \phi = \foreachj \\
\langle v \rangle & \text{otherwise}
\end{cases}\]

In addition to the filters defined in \autoref{tab:eval-semantics}, we
define the semantics of the two fold-like filters ``reduce'' and
``foreach'' as follows, where \(xs\) evaluates to
\(\langle x_0, \dots, x_n \rangle\): \begin{align*}
\reduce   xs \as \$x\; (init;\, f) &= init &
\foreachj xs \as \$x\; (init;\, f) &= init \\
& \mid x_0 \as \$x \mid f &
& \mid ., (x_0 \as \$x \mid f \\
& \mid \dots &
& \mid \dots \\
& \mid x_n \as \$x \mid f &
& \mid ., (x_n \as \$x \mid f)\dots) \\
\end{align*} Both filters fold \(f\) over the sequence \(xs\) with the
initial value \(init\). Their main difference is that ``reduce'' returns
only the final value(s), whereas ``foreach'' also returns all
intermediate ones.

The following property can be used to eliminate bindings.

\begin{lemma}Let \(\varphi(f)\) be a filter such that
\(\varphi(f)|^c_v\) has the shape ``\(\sum_{x \in f|^c_v} \dots\)''.
Then \(\varphi(f)\) is equivalent to
``\(f \as \$x \mid \varphi(\$x)\)''.\end{lemma}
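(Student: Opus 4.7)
The plan is to reduce the claim to unfolding both sides with the definitions from \autoref{tab:eval-semantics} and matching them pointwise. Starting from the right-hand side, the evaluation rule for variable binding gives
\[(f \as \$x \mid \varphi(\$x))|^c_v = \sum_{x \in f|^c_v} \varphi(\$x)|^{c\{\$x \mapsto x\}}_v,\]
while the hypothesis reads $\varphi(f)|^c_v = \sum_{x \in f|^c_v} \psi(x)$ for some expression $\psi(x)$ built from the remaining constituents of $\varphi$, together with $c$ and $v$. The task therefore reduces to showing $\varphi(\$x)|^{c\{\$x \mapsto x\}}_v = \psi(x)$ for each $x$ in the stream.

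To do this I would first record two auxiliary facts. \emph{Variable lookup}: by the semantics of $\$x$, one has $\$x|^{c\{\$x \mapsto x\}}_w = \langle x \rangle$ for every input $w$, so putting $\$x$ in the hole behaves, at the semantic level, like a filter that constantly produces the singleton $\langle x \rangle$. \emph{Freshness}: under the standing convention that $\$x$ is chosen not to occur in $\varphi$ (nor in the image of $c$), a routine induction on filter structure shows $g|^{c\{\$x \mapsto x\}}_w = g|^c_w$ for any subfilter $g$ of $\varphi$ and any $w$, so the extension of the context is invisible to the rest of $\varphi$. Together these facts let $\$x$ stand in for $f$ semantically once $f$ has produced the particular value $x$.

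With the two facts in hand, I would verify the equivalence case by case for each production in \autoref{tab:eval-semantics} whose evaluation actually takes the form $\sum_{x \in f|^c_v} \dots$, namely $f \mid g$, $f \as \$y \mid g$ with $f$ in the binding position, $f \circ g$, $f?$, $f \andj g$, $f \orj g$, $\ifj f \thenj g \elsej h$, $.[f]$, and the folds when $f$ occupies the $init$ slot. Each case is a mechanical substitution of $\$x$ for $f$ followed by one application of each auxiliary fact. The main obstacle is not computational but definitional: the hypothesis ``$\varphi(f)|^c_v$ has the shape $\sum_{x \in f|^c_v} \dots$'' is slightly informal, and I would make it precise by reading $\varphi$ as a one-hole context whose hole sits in a position whose semantic rule begins with $\sum_{x \in f|^c_v}$. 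Once that reading is fixed, the enumeration above is exhaustive, and no essential difficulty remains.
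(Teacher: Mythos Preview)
Your proposal is correct and takes essentially the same approach as the paper: unfold the binding semantics on the right-hand side and verify the equivalence case by case over the productions in \autoref{tab:eval-semantics} whose rule begins with $\sum_{x\in f|^c_v}$. The paper works out only the $.[f]$ case explicitly and declares the others analogous; your explicit isolation of the variable-lookup and freshness facts is a useful refinement but not a departure from the paper's argument.
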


\begin{proof}We have to prove the statement for \(\varphi(f)\) set to
``\(f \mid g\)'', ``\(f \as \$x \mid g\)'', ``\(f \circ g\)'',
``\(f?\)'', ``\(f \andj g\)'', ``\(f \orj g\)'',
``\(\ifj f \thenj g \elsej h\)'', ``\(.[f]\)'', and
``\(\phi\; xs \as \$x(f; g)\)''. Let us consider the filter
\(\varphi(f)\) to be \(.[f]\). Then we show that \(.[f]\) is equivalent
to \(f \as \$x \mid .[\$x]\): \begin{align*}
  (f \as \$x \mid .[\$x])|^c_v
  &= \sum_{x \in f|^c_v} .[\$x]|^{c\{\$x \mapsto x\}}_v \\
  &= \sum_{x \in f|^c_v} \sum_{i \in \$x|^{c\{\$x \mapsto x\}}_v} \langle v[i] \rangle \\
  &= \sum_{x \in f|^c_v} \sum_{i \in \langle x \rangle} \langle v[i] \rangle \\
  &= \sum_{x \in f|^c_v} \langle v[x] \rangle \\
  &= .[f]|^c_v
  \end{align*} The other cases for \(\varphi(f)\) can be proved
similarly.\end{proof}

The semantics of jq and those shown in \autoref{tab:eval-semantics}
differ most notably in the case of updates; that is, \(f \models g\). We
are going to deal with this in the next subsection.

\hypertarget{updates}{%
\subsection{Updates}\label{updates}}

jq's update mechanism works with \emph{paths}. A path is a sequence of
indices \(i_j\) that can be written as \(.[i_1]\dots[i_n]\). It refers
to a value that can be retrieved by the filter
``\(.[i_1] \mid \dots \mid .[i_n]\)''. Note that ``\(.\)'' is a valid
path, referring to the input value.

The update operation ``\(f \models g\)'' attempts to first obtain the
paths of all values returned by \(f\), then for each path, it replaces
the value at the path by \(g\) applied to it. Note that \(f\) is not
allowed to produce new values; it may only return paths.

\begin{example}Consider the input value \([[1, 2], [3, 4]]\). We can
retrieve the arrays \([1, 2]\) and \([3, 4]\) from the input with the
filter ``\(.[]\)'', and we can retrieve the numbers 1, 2, 3, 4 from the
input with the filter ``\(.[] \mid .[]\)''. To replace each number with
its successor, we run ``\((.[] \mid .[]) \models .+1\)'', obtaining
\([[2, 3], [4, 5]]\). Internally, in jq, this first builds the paths
\(.[0][0]\), \(.[0][1]\), \(.[1][0]\), \(.[1][1]\), then updates the
value at each of these paths with \(g\).\end{example}

There are several problems with this approach to updates: One of these
problems is that if \(g\) returns no output, the collected paths may
point to values that do not exist any more.

\begin{example}\label{ex:update}Consider the input value
\([1, 2, 2, 3]\) and the filter ``\(.[] \models g\)'', where \(g\) is
``\(\ifj . = 2 \thenj \emptys \elsej .\)'', which we might suppose to
delete all values equal to 2 from the input list. However, the output of
jq is \([1, 2, 3]\). What happens here is perhaps unexpected, but
consistent with the above explanation of jq's semantics: jq builds the
paths \(.[0]\), \(.[1]\), \(.[2]\), and \(.[3]\). Next, it applies \(g\)
to all paths. Applying \(g\) to \(.[1]\) removes the first occurrence of
the number 2 from the list, leaving the list \([1, 2, 3]\) and the paths
\(.[2]\), \(.[3]\) to update. However, \(.[2]\) now refers to the number
3, and \(.[3]\) points beyond the list.\end{example}

Even if this particular example can be executed correctly with a special
case for filters that do not return exactly one output, there are more
general examples which this approach treats in unexpected ways.

\begin{example}\label{ex:update-comma}Consider the input value \([[0]]\)
and the filter ``\((.[],\; .[][]) \models g\)'', where \(g\) is
``\(\ifj . = [0] \thenj [1, 1] \elsej .+1\)''. Executing this filter in
jq first builds the path \(.[0]\) stemming from ``\(.[]\)'', then
\(.[0][0]\) stemming from ``\(.[][]\)''. Next, executing \(g\) on the
first path yields the intermediate result \([[1, 1]]\). Now, executing
\(g\) on the remaining path yields \([[2, 1]]\), instead of \([[2, 2]]\)
as we might have expected.\end{example}

The general problem here is that the execution of the filter \(g\)
changes the input value, yet only the paths constructed from the initial
input are considered. This leads to paths pointing to the wrong data,
paths pointing to non-existent data (both occurring in
\cref{ex:update}), and missing paths (\cref{ex:update-comma}).

I now show different semantics that avoid this problem, by interleaving
calls to \(f\) and \(g\). By doing so, these semantics can abandon the
idea of paths altogether.

The semantics use a helper function that takes an input array \(v\) and
replaces its \(i\)-th element by the output of \(\sigma\) applied to it:
\[(.[i] \models \sigma)|^c_v = \begin{cases}
[\langle v_0, \dots, v_{i-1} \rangle + \sigma|^c_{v_i} + \langle v_{i+1}, \dots, v_n \rangle] & \text{if } v = [v_0, \dots, v_n] \text{ and } 0 \leq i < n \\
\bot & \text{otherwise}
\end{cases}\]

\begin{table}

\caption{Update semantics. Here, \(\$x'\) is a fresh variable.
\label{tab:update-semantics}}

\begin{tabular}{ll}

\toprule

\(\mu\) & \(\mu \models \sigma\)\tabularnewline

\midrule

\(\emptys\) & \(.\)\tabularnewline
\(.\) & \(\sigma\)\tabularnewline
\(f \mid g\) & \(f \models (g \models \sigma)\)\tabularnewline
\(f, g\) & \((f \models \sigma) \mid (g \models \sigma)\)\tabularnewline
\(f \as \$x \mid g\) & \(\reduce f \as \$x'\; (.;\; g[\$x' / \$x] \models \sigma)\)\tabularnewline
\(\ifj f \thenj g \elsej h\) & \(\reduce f \as \$x'\; (.;\; \ifj \$x' \thenj g \models \sigma \elsej h \models \sigma)\)\tabularnewline
\(.[f]\) & \(\reduce f \as \$x'\; (.;\; .[\$x'] \models \sigma)\)\tabularnewline
\(.[]\) & \([.[] \mid \sigma]\)\tabularnewline
\(x(f_1; \dots; f_n)\) & \(g[f_1 / x_1, \dots, f_n / x_n] \models \sigma\)
if \(x(x_1; \dots; x_n) \coloneqq g\)\tabularnewline

\bottomrule

\end{tabular}

\end{table}

The update semantics are given in \autoref{tab:update-semantics}. The
case for \(f \as \$x \mid g\) is slightly tricky: Here, the intent is
that \(g\) has access to \(\$x\), but \(\sigma\) does not. This is to
ensure compatibility with jq's original semantics, which execute \(\mu\)
and \(\sigma\) independently, so \(\sigma\) should not be able to access
variables bound in \(\mu\). In order to ensure that, we replace \(\$x\)
by a fresh variable \(\$x'\) and substitute \(\$x\) by \(\$x'\) in
\(g\).

\begin{example}Consider the filter
\(0 \as \$x \mid (1 \as \$x \mid .[\$x]) \models \$x\). This updates the
input array at index \(1\). If the right-hand side of ``\(\models\)''
had access to variables bound on the right-hand side, then the array
element would be replaced by \(1\), because the variable binding
\(0 \as \$x\) would be shadowed by \(1 \as \$x\). However, because we
enforce that the right-hand side does not have access to variables bound
on the right-hand side, the array element is replaced by \(0\), which is
the value originally bound to \(\$x\). Given the input array
\([1, 2, 3]\), the filter yields the final result
\([1, 0, 3]\).\end{example}

In summary, the given semantics are easier to define and to reason
about, while keeping compatibility with the original semantics in most
use cases. Furthermore, avoiding to construct paths also appears to be
more performant, as I will show in \autoref{evaluation}.

\hypertarget{implementation}{%
\section{Implementation}\label{implementation}}

I implemented an interpreter for the specification in
\autoref{specification}. This interpreter is called \emph{jaq} and is
written in Rust \citep{DBLP:phd/dnb/Jung20}. Rust is a functional
systems programming language that focusses on memory safety and
performance.

jaq is divided into several parts, namely a parser, an interpreter, and
a command-line interface (CLI). The interpreter and the parser do not
use Rust's standard library and can be used independently from the CLI,
which allows their integration into other projects.

The memory allocator has a significant impact on the execution speed. In
particular, jaq uses the memory allocator mimalloc
\citep{DBLP:conf/aplas/LeijenZM19} as it has proven to improve
performance compared to the standard memory allocator.

In this section, we will first have a look at Rust's approach to sharing
data (\autoref{sharing}). Then, we discuss the data types that represent
filters (\autoref{filters}) and values (\autoref{values}). Finally, we
analyse the cloning behaviour of several filters (\autoref{cloning}),
and when sacrificing lazy for strict evaluation can contribute to higher
performance (\autoref{strictness}).

\hypertarget{sharing}{%
\subsection{Sharing}\label{sharing}}

Rust, like C/C++, does not implicitly share data and does not have a
garbage collector. That means that by default, duplicating a variable
creates a (deep) copy of its data in memory. In contrast, in many
garbage-collected languages, such as Haskell and OCaml, assigning
variables to other ones does not copy the data the variables refer to.
If we wish to share data in such a way in Rust, we can do so by wrapping
it with a reference-counting pointer type: For example,
\texttt{Rc\textless{}T\textgreater{}}, which is the counterpart of C++'s
\texttt{shared\_ptr\textless{}T\textgreater{}}, allows us to share data
of type \texttt{T} and deallocates the data when there is no more
reference to it.

\autoref{lst:rc} shows an example usage of \texttt{Rc}. In particular,
it shows how \texttt{Rc} allows for constant-time copying of values, at
the price of allowing mutating its data only when it has exclusive
access. The fact that Rust prevents us from using a value that has been
moved to an \texttt{Rc} (like \texttt{huge\_vec} in \autoref{lst:rc}) is
just one of the many safety nets of the language that ensure memory
safety. In languages such as C/C++, similar code might cause undefined
behaviour at runtime.

Later in this section, we will see that we have an interest to both
share data and potentially mutate it. To allow for mutation as much as
possible, the lesson here is the following: (1) We should aim to avoid
cloning \texttt{Rc}-shared values, in order to retain exclusive access
to the contained data. (2) If we must clone, we should at least strive
to limit the scope of the clone, in order to regain exclusive access to
the data (like in \autoref{lst:rc} after \texttt{b} has gone out of
scope). We will discuss how to fulfil (1) in \autoref{cloning} and (2)
in \autoref{strictness}.

\begin{listing}
\caption{Example usage of \texttt{Rc}.}

\hypertarget{lst:rc}{%
\label{lst:rc}}%
\begin{Shaded}
\begin{Highlighting}[]
\KeywordTok{let}\NormalTok{ huge\_vec }\OperatorTok{=} \DataTypeTok{Vec}\PreprocessorTok{::}\NormalTok{from([}\DecValTok{0}\OperatorTok{;} \DecValTok{100\_000}\NormalTok{])}\OperatorTok{;}
\CommentTok{// copying a \textasciigrave{}Vec\textasciigrave{} takes linear time}
\KeywordTok{let}\NormalTok{ copy\_vec }\OperatorTok{=}\NormalTok{ huge\_vec}\OperatorTok{.}\NormalTok{clone()}\OperatorTok{;}

\CommentTok{// move \textasciigrave{}huge\_vec\textasciigrave{} to a reference{-}counted pointer ...}
\KeywordTok{let} \KeywordTok{mut}\NormalTok{ a }\OperatorTok{=} \PreprocessorTok{Rc::}\NormalTok{new(huge\_vec)}\OperatorTok{;}
\CommentTok{// ... so trying to access it afterwards would throw a compiler error}
\CommentTok{//let len = huge\_vec.len();}

\CommentTok{// \textasciigrave{}a\textasciigrave{} has exclusive access to its data ...}
\PreprocessorTok{assert\_eq!}\NormalTok{(}\PreprocessorTok{Rc::}\NormalTok{strong\_count(}\OperatorTok{\&}\NormalTok{a)}\OperatorTok{,} \DecValTok{1}\NormalTok{)}\OperatorTok{;}
\CommentTok{// ... therefore we could mutate the data in \textasciigrave{}a\textasciigrave{}}
\PreprocessorTok{assert!}\NormalTok{(}\PreprocessorTok{Rc::}\NormalTok{get\_mut(}\OperatorTok{\&}\KeywordTok{mut}\NormalTok{ a)}\OperatorTok{.}\NormalTok{is\_some())}\OperatorTok{;}

\OperatorTok{\{} \CommentTok{// we open a new scope here}
    \CommentTok{// copying an \textasciigrave{}Rc\textasciigrave{} takes constant time, as its data is shared}
    \KeywordTok{let}\NormalTok{ b }\OperatorTok{=} \PreprocessorTok{Rc::}\NormalTok{clone(}\OperatorTok{\&}\NormalTok{a)}\OperatorTok{;}
    \CommentTok{// however, now \textasciigrave{}a\textasciigrave{} has lost exclusive access to its data ...}
    \PreprocessorTok{assert\_eq!}\NormalTok{(}\PreprocessorTok{Rc::}\NormalTok{strong\_count(}\OperatorTok{\&}\NormalTok{a)}\OperatorTok{,} \DecValTok{2}\NormalTok{)}\OperatorTok{;}
    \CommentTok{// ... therefore we cannot mutate the data in neither \textasciigrave{}a\textasciigrave{} nor \textasciigrave{}b\textasciigrave{}}
    \PreprocessorTok{assert!}\NormalTok{(}\PreprocessorTok{Rc::}\NormalTok{get\_mut(}\OperatorTok{\&}\KeywordTok{mut}\NormalTok{ a)}\OperatorTok{.}\NormalTok{is\_none())}\OperatorTok{;}
\OperatorTok{\}} \CommentTok{// \textasciigrave{}b\textasciigrave{} is going out of scope and is destroyed ...}
\CommentTok{// ... therefore we regain exclusive access to \textasciigrave{}a\textasciigrave{} and could mutate it again}
\PreprocessorTok{assert\_eq!}\NormalTok{(}\PreprocessorTok{Rc::}\NormalTok{strong\_count(}\OperatorTok{\&}\NormalTok{a)}\OperatorTok{,} \DecValTok{1}\NormalTok{)}\OperatorTok{;}
\end{Highlighting}
\end{Shaded}

\end{listing}

\hypertarget{filters}{%
\subsection{Filters}\label{filters}}

jaq represents filters as abstract syntax trees, in which variables are
mapped to de Bruijn indices \citep{deBruijn72}. This allows us to
implement the context as a list whose \(n\)-th element contains the
value bound to the variable \(n\). The context is an \texttt{Rc}-shared
linked list, which enables fast cloning and appending at the front. In
general, \(n\) is quite small, so lookup time is not an issue.

\emph{Core filters}, such as \texttt{length} or \texttt{sort}, are
implemented in the implementation language (C for jq and Rust for jaq).
In general, core filters execute significantly faster than filters
implemented by definition. Unlike jq, jaq does currently not support
defining \emph{recursive} filters. Therefore, several filters that are
implemented by recursive definition in jq, such as
\(\recurse(f) \coloneqq ., (f \mid \recurse(f))\), are implemented as
core filters in jaq. On the other hand, this allows jaq to inline all
calls to defined filters, resulting in one large filter without
definitions.

The implementation provides a function to evaluate filters: It takes a
filter \(\varphi\), a context \(c\), and a value \(v\), and yields an
\texttt{Iterator} over value results. Both the type and the
implementation of this function closely correspond to the evaluation
semantics \(\varphi|^c_v\) shown in \autoref{semantics}. Filters can be
safely shared across threads. This allows to evaluate any filter with
multiple inputs concurrently.

The \texttt{Iterator} trait from Rust's core library provides many
useful functions that enable a concise and performant implementation of
the semantics. For example, \texttt{Iterator} provides a lower and an
(optional) upper bound on the number of its elements (via the function
\texttt{size\_hint}). These bounds are inferred automatically for the
filter output \texttt{Iterator} from the implementation of the
evaluation function.

\begin{example}The evaluation of the concatenation \(f, g\) is
implemented using the existing \texttt{Iterator} combinator
\texttt{chain} in one line of code. The resulting \texttt{Iterator}
automatically derives its bounds; for example, the lower bound on the
number of elements yielded by \(f, g\) is the sum of the lower bounds on
the number of elements yielded by \(f\) and \(g\).\end{example}

Generally, the bounds are not tight because the number of elements
yielded by a filter cannot always be predicted (as filters are
Turing-complete!). However, for many simple filters, \texttt{Iterator}
yields tight bounds.

The bounds are used among others when collecting the output of a filter
\(f\) into an array by ``\([f]\)'', where the bounds on the number of
outputs yielded by \(f\) are used to pre-allocate memory for the output
array. In \autoref{cloning}, we will see another usage of
\texttt{Iterator} bounds.

\hypertarget{values}{%
\subsection{Values}\label{values}}

A shortened implementation of the value type is shown in
\autoref{lst:val}. It states that a value is either a boolean, an
integer, or an array.\footnote{Compared with this simplified value type,
  an actual JSON value can additionally be \texttt{null}, a decimal
  number, a string, or an associative map from strings to values.} In
the array case, \texttt{Rc\textless{}T\textgreater{}} is a
reference-counted pointer to data of type \texttt{T}. Using \texttt{Rc}
for arrays (and also for other types, such as strings and maps) is
crucial because otherwise, duplicating a \texttt{Val} would take time
linear in the size of the value, whereas with \texttt{Rc}, it takes
constant time.

\begin{listing}
\caption{Shortened version of the value type.}

\hypertarget{lst:val}{%
\label{lst:val}}%
\begin{Shaded}
\begin{Highlighting}[]
\KeywordTok{enum}\NormalTok{ Val }\OperatorTok{\{}
\NormalTok{    Bool(}\DataTypeTok{bool}\NormalTok{)}\OperatorTok{,}
\NormalTok{    Int(}\DataTypeTok{isize}\NormalTok{)}\OperatorTok{,}
\NormalTok{    Arr(Rc}\OperatorTok{\textless{}}\DataTypeTok{Vec}\OperatorTok{\textless{}}\NormalTok{Val}\OperatorTok{\textgreater{}\textgreater{}}\NormalTok{)}\OperatorTok{,}
\OperatorTok{\}}
\end{Highlighting}
\end{Shaded}

\end{listing}

As mentioned in \autoref{sharing}, there are situations in which we
would like to mutate values: Consider the filter
``\texttt{.\ +\ {[}0{]}}'', which checks that the input is an array and
returns its concatenation with \texttt{{[}0{]}}. It is tempting to
mutate the input array by concatenating it with \texttt{{[}0{]}}.
However, this might not be possible because there might be multiple
references to the input array. A safe option would be to create a new
array from the input array so that there is only a single reference to
it, then concatenate it with \texttt{{[}0{]}}. This is a correct, yet
very inefficient solution, because repeated concatenation with \(n\)
singleton arrays requires quadratic runtime due to creating \(n\) array
copies. We can solve this problem by cloning \emph{conditionally}: If
there is only a single reference to the array, then we mutate it,
otherwise we bite the bullet and clone the array before mutating it.
This pattern is provided by Rust's \texttt{Rc} type in form of the
function \texttt{make\_mut}. This entails that we have an interest to
avoid keeping references to values in order to allow for clone-free
mutation as much as possible. We will discuss the cloning behaviour of
several elementary filters in \autoref{cloning}.

\hypertarget{cloning}{%
\subsection{Cloning}\label{cloning}}

In this section, we will inspect the cloning behaviour of three
elementary filters, namely composition, concatenation, and binding. We
furthermore discuss a way to avoid clones in bindings, thus fulfilling
goal (1) laid out in \autoref{sharing}. The behaviour of the filters
described here coincides with the semantics in \autoref{semantics}.

We visualise filters as follows: A block represents a filter, taking one
value as input and yielding arbitrarily many values as output. A block
with rounded corners represents a filter that we are about to describe.
A line with an arrow represents a single value. A doubled line (such as
the line going into \(f\) in \autoref{fig:concat}) represents a cloned
value.

\begin{figure}
\begin{minipage}{.5\textwidth}
  \centering
  \includegraphics{tikz/compose.tex}
  \captionof{figure}{Composition of filters $f$ and $g$.}
  \label{fig:compose}
\end{minipage}%
\begin{minipage}{.5\textwidth}
  \centering
  \includegraphics{tikz/concat.tex}
  \captionof{figure}{Concatenation of filters $f$ and $g$.}
  \label{fig:concat}
\end{minipage}
\end{figure}

Let us start with the visualisation of composition in
\autoref{fig:compose}. Here, the input value to \(f \mid g\) is passed
to \(f\), and each of the outputs of \(f\) is passed to \(g\), yielding
the concatenation of the outputs of \(g\) as output of \(f \mid g\). No
cloning is required for this filter, as every input and output value is
used at most once.

Next, we inspect concatenation in \autoref{fig:concat}. Here, the input
of \(f, g\) is passed to both \(f\) and \(g\), and the output of
\(f, g\) is the concatenation of the outputs of \(f\) and \(g\). The
filter \(f\) receives a clone of the input value, whereas \(g\) receives
the original input value. This means that to mutate the input in \(f\),
we \emph{necessarily} have to clone it, whereas we might be able to
clone-freely mutate the input in \(g\) (depending on the existence of
other references to the input).

Note that this restriction is a consequence of our assumption that any
filter may access its input.

\begin{example}Consider the filter ``\((. + [0]), 1\)''. Here, we pass
the cloned input to ``\(. + [0]\)'', requiring a clone of the input
array to concatenate \([0]\) to it. Furthermore, we pass the original
input to ``\(1\)''. However, the filter ``\(1\)'' does clearly not
depend on its input, so in principle, we could pass the original input
to ``\(. + [0]\)'', allowing it to clone-freely mutate the
input.\end{example}

However, distinguishing filters using their input from filters not using
it would significantly complicate the implementation, which is why I
refrained from doing so.

\begin{figure}
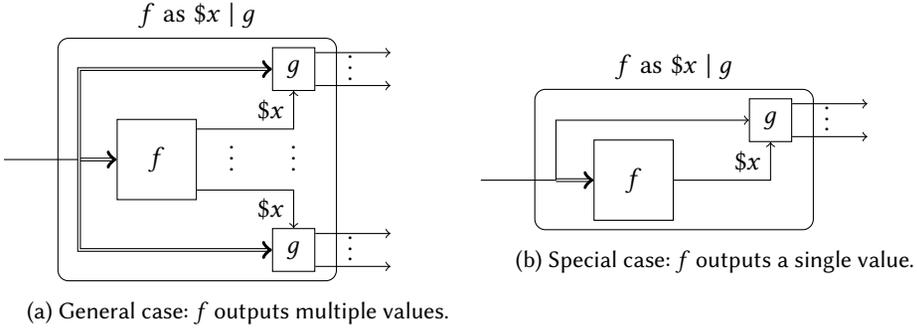

\begin{subfigure}{0.45\textwidth}
  \includegraphics{tikz/bind-many.tex}
  \caption{General case: $f$ outputs multiple values.}
  \label{fig:bind-many}
\end{subfigure}
\begin{subfigure}{0.45\textwidth}
  \includegraphics{tikz/bind-single.tex}
  \caption{Special case: $f$ outputs a single value.}
  \label{fig:bind-single}
\end{subfigure}
\caption{Binding the output of a filter $f$ as $\$x$ in $g$.}
\label{fig:bind}
\end{figure}

Let us continue with the last filter, namely binding, shown in
\autoref{fig:bind}. The general case for \(f \as \$x \mid g\), where we
do not know in advance how many outputs \(f\) yields, is shown in
\autoref{fig:bind-many}. Here, the input is passed as clone to both
\(f\) and all instances of \(g\), and every output of \(f\) is bound to
\(\$x\) and passed to \(g\). That means that we cannot clone-freely
mutate the input neither in \(f\) nor in \(g\). What if we could bound
the number of outputs of \(f\) in advance to \(n\)? In that case, we
could reduce the number of clones from \(n+1\) in the general case to
\(n\). This is because once the last value from \(f\) has been yielded,
we could pass the original input to the last instance of \(g\). The
implementation implements this idea for the special case where we know
that \(f\) yields at most \(n = 1\) outputs, which is shown in
\autoref{fig:bind-single}. In this case, we pass the cloned input to
\(f\) and the original input to \(g\).

The question remains how to predict the number of outputs yielded by
\(f\). Fortunately, this is surprisingly easy: Executing a filter yields
an \texttt{Iterator} over its output values, and as discussed in
\autoref{filters}, \texttt{Iterator} provides lower and upper bounds on
the number of its elements. The code to execute \(f \as \$x \mid g\)
thus uses a special case when executing \(f\) yields at least and at
most one element.

\hypertarget{strictness}{%
\subsection{Strictness}\label{strictness}}

In this subsection, we will discuss strict evaluation of certain filters
in jaq, namely Cartesian operations such as addition. This makes jaq's
evaluation differ from the formal semantics (\autoref{semantics}), which
do not provide for strict evaluation. In return, it reduces the scope of
cloned values as formulated as goal (2) in \autoref{sharing}, thus
increasing the chances of clone-free mutation in order to yield higher
performance.

The Cartesian filter \(l \circ r\) yields the following values, where
\(r\) evaluates to \(r_1, \dots, r_n\) and \(l\) evaluates to
\(l_1, \dots, l_m\): \[\begin{matrix}
l_1 \circ r_1, & \dots, & l_1 \circ r_n, \\
\vdots & \ddots & \vdots \\
l_m \circ r_1, & \dots, & l_m \circ r_n
\end{matrix}\] Supposing that ``\(\circ\)'' is defined on values, we
could naively evaluate ``\(l \circ r\)'' by
``\(l \as \$x \mid r \as \$y \mid \$x \circ \$y\)'' (given that \(\$x\)
does not occur in \(r\)). This approach has two downsides: First, for
every value \(\$x\) yielded by \(l\), we have to re-evaluate \(r\), even
though it does not depend on \(\$x\). Second, as we can see in
\autoref{fig:bind-single}, \(l\) always operates on a cloned version of
the input. However, Cartesian operations \(l \circ r\) are biased to
only mutate \(l\), because arrays and strings are more efficiently
extended to the right. This means that this approach is detrimental to
performance because it prevents clone-free mutation of values yielded by
\(l\). Maybe for this reason, jq implements an approach similar to
``\(r \as \$y \mid l \as \$x \mid \$x \circ \$y\)'', which allows for
clone-free mutation of values yielded by \(l\), but only if \(r\) yields
a single output. Furthermore, this approach yields the output in a
different (and perhaps surprising) order. jaq implements a third
approach: It first collects \emph{all} outputs of \(r\) into an array
\(\$ys\). Only then, it evaluates \(l\) with the original input, and for
every output \(\$x\) of \(l\), for every value \(\$y\) in \(\$ys\), it
yields \(\$x \circ \$y\). Unlike the previous two approaches, this
approach does not duplicate work and enables clone-free mutation of
values yielded by \(l\) regardless of the number of outputs of \(r\).
Furthermore, unlike the approach taken in jq, this approach yields the
outputs in the same order as the first approach. However, the price it
pays is the strict evaluation of \(r\), which never terminates if \(r\)
yields an infinite sequence of outputs. Fortunately, when this is an
issue, we can always use the first or second approach in jaq.

\hypertarget{evaluation}{%
\section{Evaluation}\label{evaluation}}

I evaluated the performance of jq\footnote{jq was obtained from
  \url{https://github.com/stedolan/jq/}, rev.
  \href{https://github.com/stedolan/jq/tree/cff5336ec71b6fee396a95bb0e4bea365e0cd1e8}{cff5336}.
  This version significantly improves performance over the latest stable
  release, jq 1.6. In the \texttt{Makefile}, \texttt{-DNDEBUG} was added
  to \texttt{DEFS} in order to omit assertion checks.} and jaq 0.8.1 by
measuring their execution runtime on a hand-crafted set of filters.
These filters cover a large part of the filters introduced in
\autoref{preliminaries}.

I performed the evaluation on Ubuntu 22.04, running on a machine with
four Intel Core i3-5010U CPUs à 2.10 GHz and 8 GB RAM. jq and jaq were
compiled with GCC 11.2.0 and Rust 1.62.0. I exploited only a single core
for the evaluation.

All benchmarks are parametrised by an input integer \(n\). The benchmark
\texttt{b} parametrised with \texttt{n} is referred to as
\texttt{b}-\(n\); for example, \texttt{empty}-128.

There are two special benchmarks:

\begin{itemize}
\tightlist
\item
  \texttt{empty}: This benchmark consists of the filter \texttt{empty}
  that is called \(n\) times with no input. It serves to measure the
  overhead of starting the program.
\item
  \texttt{bf-fib}: This benchmark consists of a Brainfuck interpreter
  written in jq that is ran on a Brainfuck program evaluating \(n\)
  Fibonacci numbers.
\end{itemize}

The other benchmarks all call a filter a single time, passing \(n\) as
input value. Because their produced values are rather large, the output
of the filter is passed through the filter \texttt{length}, in order not
to measure I/O speed.

\begin{itemize}
\tightlist
\item
  \texttt{reverse}: The filter
  ``\texttt{{[}range(.){]}\ \textbar{}\ reverse}'' reverses an array
  \([0, \dots, n-1]\). Here, ``\texttt{range(n)}'' returns the sequence
  \(0, 1, \dots, n-1\).
\item
  \texttt{sort}: The filter
  ``\texttt{{[}range(.)\ \textbar{}\ -.{]}\ \textbar{}\ sort}'' sorts an
  array of decreasing numbers \([0, -1, \dots, -(n-1)]\).
\item
  \texttt{add}: The filter
  ``\texttt{{[}range(.)\ \textbar{}\ {[}.{]}{]}\ \textbar{}\ add}''
  constructs an array of arrays where each contains a single number,
  then concatenates all singleton arrays into one large array with
  \texttt{add}.
\item
  \texttt{kv}: The filter \texttt{kv}, which is defined as
  ``\texttt{{[}range(.)\ \textbar{}\ \{(tostring):\ .\}{]}\ \textbar{}\ add}'',
  constructs an object \texttt{\{"0":\ 0,\ "1":\ 1,\ ...\}} consisting
  of \(n\) key-value pairs.
\item
  \texttt{kv-update}: The filter
  ``\texttt{kv\ \textbar{}\ .{[}{]}\ +=\ 1}'' increments the values of
  the object constructed with \texttt{kv}. Here, ``\texttt{p\ +=\ f}''
  is short-hand for ``\texttt{p\ \textbar{}=\ .\ +\ f}''.
\item
  \texttt{kv-entries}: The filter
  ``\texttt{kv\ \textbar{}\ with\_entries(.value\ +=\ 1)}'' performs the
  same as \texttt{kv-update}, but by deconstructing the object into an
  array of key-value pairs, mapping over them, then reconstructing an
  object.
\item
  \texttt{ex-implode}: The filter
  ``\texttt{{[}limit(.;\ repeat("a")){]}\ \textbar{}\ add\ \textbar{}\ explode\ \textbar{}\ implode}''
  first constructs a string consisting of \(n\) occurrences of ``a'',
  then deconstructs the string into an array of its characters with
  \texttt{explode}, then reconstructs the array into a string with
  \texttt{implode}.\footnote{The filter
    \texttt{explode\ \textbar{}\ implode} is the identity on strings.}
\item
  \texttt{reduce}: The filter
  ``\texttt{reduce\ range(.)\ as\ \$x\ ({[}{]};\ .\ +\ {[}\$x\ +\ .{[}-1{]}{]})}''
  yields an array \([x_0, \dots, x_n]\), where \(x_0 = 0\) and
  \(x_i = i + x_{i - 1}\).
\item
  \texttt{tree-flatten}: The filter
  ``\texttt{nth(.;\ 0\ \textbar{}\ trees)\ \textbar{}\ flatten}''
  constructs a binary tree of depth \(n\) having 0 as leafs, then
  obtains all of its \(2^n\) leafs. Here, the filter \texttt{trees}
  expands to ``\texttt{recurse({[}.,\ .{]})}''.
\item
  \texttt{tree-update}: The filter
  ``\texttt{nth(.;\ 0\ \textbar{}\ trees)\ \textbar{}\ (..\ \textbar{}\ scalars)\ \textbar{}=\ .+1}''
  constructs a binary tree as in the previous example, then updates all
  leafs by incrementing them.
\item
  \texttt{to-fromjson}: The filter
  ``\texttt{"{[}"\ +\ ({[}range(.)\ \textbar{}\ tojson{]}\ \textbar{}\ join(","))\ +\ "{]}"\ \textbar{}\ fromjson}''
  first constructs a string that encodes the array \([0, \dots, n-1]\),
  then parses it into the array.
\end{itemize}

\begin{figure}
\includegraphics{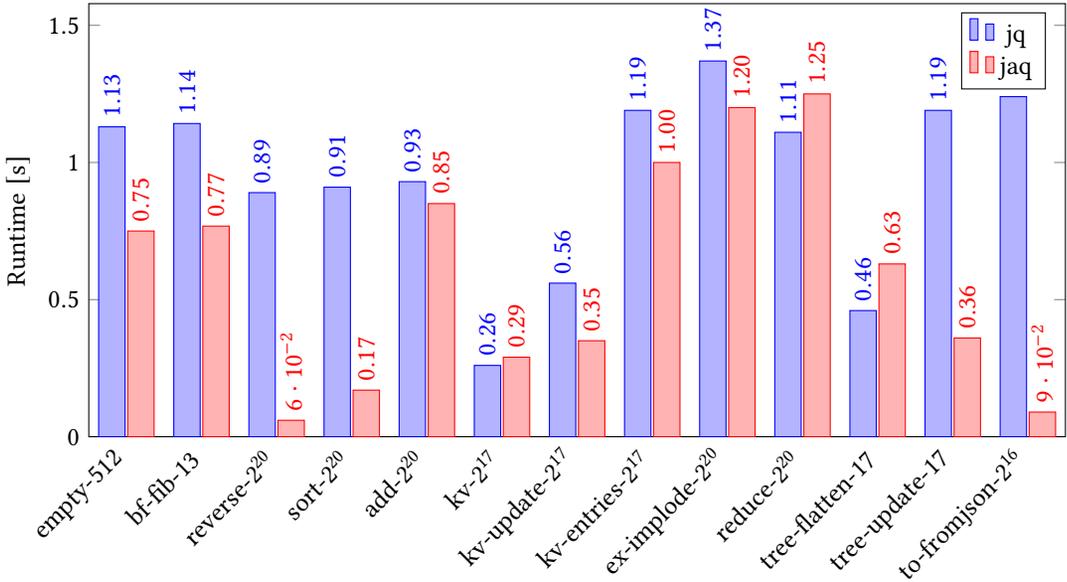}
\caption{Runtime of jq and jaq for a set of benchmarks (lower is
better).}
\label{fig:runtime}
\end{figure}

The results are shown in \autoref{fig:runtime}. In ten out of thirteen
benchmarks, jaq is faster than jq. Some of these benchmark results where
jaq is faster can be attributed to the efficient data structures in
Rust's core library, such as \texttt{Vec}, which shines in the
\texttt{reverse} and \texttt{sort} benchmarks.

One benchmark that shows the power of the new update semantics is
\texttt{to-fromjson}: In this benchmark, the operation that takes most
time in jq is \texttt{join}. In jq, a simplified version\footnote{jq
  actually uses a more complex version of \texttt{join} that converts
  the list elements to strings depending on their type, but this is
  neither essential to the example nor significantly impacts
  performance.} of \texttt{join(sep)} is defined by
``\texttt{reduce\ .{[}{]}\ as\ \$x\ (null;\ if\ .\ ==\ null\ then\ \$x\ else\ .\ +\ sep\ +\ \$x\ end)}'',
whereas in jaq, it is defined by
``\texttt{(.{[}:-1{]}{[}{]}\ +=\ sep)\ \textbar{}\ add}''. The jq
version folds over the list, whereas the jaq version retrieves all list
elements except for the last (``\texttt{.{[}:-1{]}{[}{]}}''), updates
them by adding \texttt{sep} to them, then concatenates all list elements
with \texttt{add}. We can also use the jaq version of \texttt{join} in
jq; however, this is greatly slower than jq's version. Most likely, jq
builds an exhaustive list of indices before performing the update, which
we can omit in jaq due to its update semantics (\autoref{updates}).

I additionally benchmarked a jq implementation written in Go, namely
gojq 0.12.9\footnote{Retrieved from
  \url{https://github.com/itchyny/gojq/}.}. gojq is faster than both jaq
and jq only on one benchmark, namely \texttt{tree-flatten}, because
gojq's implementation of the filter \texttt{flatten} is written in Go,
not in jq. gojq is also faster than jq on \texttt{empty} and
\texttt{to-fromjson}-\(2^{16}\). On the other benchmarks, gojq is slower
than both jaq and jq.

Yet another implementation, namely yq, was not evaluated because its
syntax is not compatible with that of jq.

\hypertarget{conclusion}{%
\section{Conclusion}\label{conclusion}}

I showed formal syntax and denotational semantics for a subset of the jq
language that significantly eases reasoning about the behaviour of jq
filters. In particular, I showed shortcomings in the way the existing jq
implementation handles updates with paths, and introduced new update
semantics that avoid these pitfalls. I implemented the existing syntax
and the new semantics in an interpreter called jaq, which is heavily
built around Rust and its \texttt{Iterator} trait. In order to allow for
clone-free mutation of values, care is taken in jaq to avoid cloning
where possible, such as in the evaluation of variable bindings. The
evaluation shows that jaq is faster than jq on ten out of thirteen
benchmarks, despite having a significantly simpler execution model. This
indicates that we can have the best of many worlds, namely formal
semantics, a small implementation, and state-of-the-art performance.

\begin{acks}
Thanks to Diana Gründlinger and Fabian Mitterwallner for their valuable
comments on drafts of this paper. This research was funded in part by
the Austrian Science Fund (FWF) {[}J~4386{]}.
\end{acks}

\bibliography{literature.bib}

\end{document}